\newtheorem{mydef}{Definition}
\newtheorem{teo}{Theorem}
 \definecolor{BLACK}{gray}{0}
 \definecolor{WHITE}{gray}{1}
 \definecolor{RED}{rgb}{1,0,0}
 \definecolor{GREEN}{rgb}{0,1,0}
 \definecolor{BLUE}{rgb}{0,0,1}
 \definecolor{CYAN}{cmyk}{1,0,0,0}
 \definecolor{MAGENTA}{cmyk}{0,1,0,0}
 \definecolor{YELLOW}{cmyk}{0,0,1,0}
\begin{document}

\title{Monogamy and trade-off relations for correlated quantum coherence}

\author{Marcos L. W. Basso}
\email{marcoslwbasso@mail.ufsm.br}
\address{Departamento de F\'isica, Centro de Ci\^encias Naturais e Exatas, Universidade Federal de Santa Maria, Avenida Roraima 1000, Santa Maria, RS, 97105-900, Brazil}

\author{Jonas Maziero}
\email{jonas.maziero@ufsm.br}
\address{Departamento de F\'isica, Centro de Ci\^encias Naturais e Exatas, Universidade Federal de Santa Maria, Avenida Roraima 1000, Santa Maria, RS, 97105-900, Brazil}

\selectlanguage{english}%

\begin{abstract}
One of the fundamental differences between classical and quantum mechanics is in the ways correlations can be distributed among the many parties that compose a system. While classical correlations can be shared among many subsystems, in general quantum correlations cannot be freely shared. This unique property is known as monogamy of quantum correlations. In this work, we study the monogamy properties of the correlated coherence for the $l_1$-norm and relative entropy measures of coherence. For the $l_1$-norm the correlated coherence is monogamous for a particular class of quantum states. For the relative entropy of coherence, and using maximally mixed state as the reference incoherent state, we show that the correlated coherence is monogamous for tripartite pure quantum systems.
\end{abstract}

\keywords{Trade-off relations; Monogamy relations; Measures of quantum coherence; Correlated coherence}

\maketitle

\section{Introduction}
The wave-particle duality is one of the most fascinating aspects of quantum mechanics, and captures the role of quantum coherence in quantum systems \cite{Bohr}. Another intriguing aspect of quantum mechanics is quantum entanglement \cite{Schrodinger}. One of the reasons is that quantum entanglement, and, in general, quantum correlations, cannot be shared freely among many parties. This property is known as monogamy of quantum correlations \cite{Wooters}.  More recently, it became known that quantum coherence in a composite system can be contained either locally or in the correlations between the subsystems. The portion of quantum coherence contained within correlations can be viewed as a kind of quantum correlation, called correlated coherence \cite{Tan}. Thus, a natural question that arises, in the context of monogamy relations for quantum correlations, is whether correlated coherence also have this property. This is the question we try to answer in this article.

In quantum mechanics, the state of a physical system can be described as a vector in a Hilbert space $\mathcal{H}$. One of the features of such vector space is that any linear combination of vectors also belongs to $\mathcal{H}$, thus allowing superposition of states \cite{Ziman}. This superposition of states is crucial to explain interference patterns in multiple-slit experiments, that otherwise can't be explained by classical physics. One special kind of quantum superposition is quantum coherence, which became an important physical resource in quantum information and quantum computation \cite{nielsen}. More recently, it was shown that quantum coherence is a natural generalization of visibility for quantifying the wave aspect of a quanton in multiple-slit experiments \cite{Bera, prillwitz, Bagan, Tabish, Mishra, basso}. It also has an important role in several research fields, such as quantum biology \cite{Lloyd, Plenio} and quantum metrology \cite{Maccone}. An important step towards the quantification of quantum coherence was given by Baumgratz et al. \cite{Baumgratz}. They established reasonable conditions that a measure of coherence must satisfy to be considered a bona fide measure: Nonnegativity, monotonicity under incoherent completely positive and trace preserving maps (ICPTP), monotonicity under selective incoherent operations on average, and convexity under mixing of states. In the same work, they showed that the $l_1$-norm and the relative entropy of coherence are bone fide measures of coherence, meanwhile the Hilbert-Schmidt (or $l_2$-norm) coherence is not a coherence monotone, i.e., it is not monotone under ICPTP. Later, an equivalent and rigorous framework for quantifying coherence was given in \cite{Yu}.

Another fundamental difference between classical and quantum mechanics is in the ways correlations can be shared among the many parts that compose a system. While classical correlations can be freely shared among many subsystems, in general quantum correlations cannot be freely shared. For instance, if we consider a pair of perfectly correlated classical bits $A$ and $B$, it's possible to maximally correlate the bit $A$ also with a third bit $C$. Mathematically, if e.g. $\rho_{AB} = \frac{1}{2}( \ketbra{0,0} + \ketbra{1,1})$ describes the maximally classically-correlated state of subsystems $A$ and $B$, it's possible to construct a joint quantum state of three subsystems $\rho_{ABC} = \frac{1}{2}( \ketbra{0,0,0} + \ketbra{1,1,1})$ such that $A$ is also maximally classically-correlated with $C$, once the reduced states \cite{pTr} are equal $\rho_{AC}=\rho_{AB}$. In contrast, for two perfectly quantum-correlated (maximally entangled) qubits $A$ and $B$, it is not possible for $A$ to be maximally quantum-correlated with a third qubit $C$. Mathematically, if e.g. $\ket{\Psi}_{A,B} = \frac{1}{\sqrt{2}}(\ket{0,0}_{A,B} + \ket{1,1}_{A,B})$ describes the maximally quantum-correlated state of $A$ and $B$, there is
no tripartite quantum state $\ket{\Psi}_{A,B,C}$ such that $\rho_{AB} = \rho_{AC}$. Hence, if a pair of qubits $A$ and $B$ are maximally entangled, then the system A (or B) cannot be maximally entangled to a third system C \cite{Wooters, Winter}. Actually, the more a qubit is entangled with another qubit, the less it can be entangled with a third one. This fact can be exemplified with the state $\ket{\Psi}_{A,B,C} = x(\ket{0,0}_{A,B} + \ket{1,1}_{A,B}) \ket{j}_C + \sqrt{1 - x^2}(\ket{0,0}_{A,C} + \ket{1,1}_{A,C}) \ket{j}_B$, where $j = 0,1$ and $x \in [0,1]$. One can see that for $x = 1$ the subsystems $A$ and $B$ are perfectly correlated and $A$ and $C$ are uncorrelated, whereas for $x = 0$ the parties $A$ and $C$ are perfectly correlated but $A$ and $B$ are uncorrelated. This indicates that there is a limitation in the distribution of entanglement \cite{Khan}. This unique property, known as entanglement monogamy, has received a lot of attention by researchers \cite{Osborne, Giorgi, Pati, Jin, Guo}. Mathematically, for a tripartite quantum system described by the density matrix $\rho_{A,B,C}$, the monogamy of an arbitrary quantum correlation measure $Q$ is expressed by:
\begin{equation}
    Q(\rho_{A|BC}) \ge Q(\rho_{AB}) + Q(\rho_{AC}), \label{eq:mon}
\end{equation}
where $Q(\rho_{A|BC})$ denotes the quantum correlation $Q$ between $A$ and $BC$ taken as a unit \cite{Fei}. For instance, in the example of the classical bits given above, if the correlation measure is the classical mutual information \cite{nielsen}, we have $I(A:B) + I(A:C) = 2 > I(A:BC) = 1$. In contrast, it was shown in Ref. \cite{Wooters} that the inequality in Eq. (\ref{eq:mon}) is always satisfied for the squared concurrence, an quantum entanglement measure, applied to three qubit states.

However, it is known that entanglement is not the only quantum correlation existing in multipartite quantum systems \cite{Bromley, Adesso, Xi, Xiz, Yao}. For example, quantum discord is a type of quantum correlation that describes the incapacity of a local observer to obtain information about a subsystem without perturbing it \cite{Zurek}. In \cite{Giorgi}, the authors showed that quantum discord and entanglement of formation obey the same monogamy relationship for tripartite pure cases. But, it's known that, in general, quantum discord is not monogamous, except when the quantum states satisfy certain properties \cite{Pati}.
Meanwhile, the monogamy of coherence was first studied in \cite{Segar}, and later addressed in \cite{Zhou}. In addition, Yu \cite{Chang} studied its polygamy. More recently, it became known that quantum coherence in a composite system can be contained either locally or in the correlations between the subsystems \cite{Yao}. The portion of quantum coherence contained within correlations can be viewed as a kind of quantum correlation, called correlated coherence \cite{Tan}. Thus, a natural question that arises, in the context of monogamy relations for quantum correlation, is whether correlated coherence fulfills a monogamy relation like (\ref{eq:mon}). In this article we show that, for the $l_1$-norm, the correlated coherence is monogamous for a given class of quantum states, and we conjecture that is monogamous, at least, for tripartite pure quantum states, what is noteworthy once the $l_1$-norm correlated coherence doesn't vanish for uncorrelated separable states.  Also, for the relative entropy of coherence, and using a maximally mixed state as the reference incoherent state, we show that the correlated coherence is monogamous for tripartite pure quantum systems. Another interesting finding is that the relative entropy of correlated coherence is equal to the quantum mutual information when one uses maximally mixed state as the reference incoherent state. Finally, we also establish some trade-off relations between tripartite and bipartite quantum systems using correlated coherence. It's worth mentioning that trade-off relations between coherence and entanglement were studied in \cite{Song}, and between coherence and correlation were addressed in \cite{Zhang}.\\

We organized the remainder of this article in the following manner. In Sec. \ref{sec:pre}, we give the definition of correlated coherence and discuss some of its properties. The main results of this article are reported in Sec. \ref{sec:tra}, where we show that, for the $l_1$-norm, the correlated coherence is monogamous for a given class of quantum states. Meanwhile, for the relative entropy of coherence, and using maximally mixed state as the reference incoherent state, we show that the correlated coherence is monogamous for tripartite pure quantum systems. Furthermore, for a bipartition of a quantum system, the relative entropy of correlated coherence is equal to the mutual quantum information. We also present some trade-off relations between tripartite and bipartite quantum systems using correlated coherence. Lastly, in Sec. \ref{sec:con} we give our conclusions.

\section{Preliminaries}
\label{sec:pre}
In this section, we give the definition of correlated coherence and explore some of its properties. 
\begin{mydef}
For some coherence measure $C$, the correlated coherence of a multipartite quantum system described by the density operator  $\rho_{A_1,...,A_n}$ in the composite Hilbert space $\mathcal{H}_{A_1} \otimes ... \otimes \mathcal{H}_{A_n}$ is defined as
\begin{equation}
    C^c(\rho_{A_1...A_n}) := C(\rho_{A_1...A_n}) - \sum_{i = 1}^n C(\rho_{A_i}), 
\end{equation}
where $\rho_{A_i}$ is the reduced density matrix of the subsystem $A_i$, with $i = 1,...,n$.
\end{mydef}
In this article, we won't define the correlated coherence as $ C^c(\rho_{A_1...A_n}) = C(\rho_{A_1...A_n}) - C(\bigotimes_{i = 1}^n \rho_{A_i})$,
because not every coherence measure satisfies the property $C(\bigotimes_{i = 1}^n \rho_{A_i}) = \sum_{i = 1}^n C(\rho_{A_i})$, as we'll show in this section.
Now, let $\{\ket{i_m}_{A_m} \}_{i_m = 0}^{d_{A_m} - 1}$ be an orthonormal local basis of the Hilbert space $\mathcal{H}_{A_m}$, with $m = 1,...,n$. Then 
\begin{align}
    \rho_{A_1, ..., A_n} = \sum_{i_1,...,i_n} \sum_{j_1,...,j_n} \rho_{i_1 ... i_n,j_1...j_n}\ket{i_1,...,i_n}_{A_1,...,A_n}\bra{j_1,...,j_n},
\end{align}
and the state of the subsystem $A_m$, which is obtained by tracing over the other subsystems, is given by
\begin{align}
    \rho_{A_m} = \sum_{i_m,j_m}\rho_{i_m,j_m}^{A_m}\ket{i_m}_{A_m}\bra{j_m} = \sum_{i_m,j_m}\sum_{i_{\alpha}, \forall \alpha \neq m}\rho_{i_1 ...,i_m,..., i_n, i_1 ...,j_m,..., i_n}\ket{i_m}_{A_m}\bra{j_m},
\end{align}
where $\sum_{i_{\alpha}, \forall \alpha \neq m}$ means summation for all $i_{\alpha}$ such that $\alpha \neq m$ for a given $m$. Throughout this article, we'll use the following coherence measures:
\begin{mydef}
The $l_1$-norm measure of quantum coherence is defined as
\begin{align}
        C_{l_{1}}(\rho) := \min_{\iota \in I}||\rho-\iota||_{l_{1}} = \sum_{j\ne k}|\rho_{jk}|,
    \end{align}
where $\norm{A}_{l_1} = \sum_{j,k}\abs{A_{jk}}$, for $A \in \mathbb{C}^{n \times n}$, and $I$ is the set of all incoherent states. Meanwhile, the relative entropy of quantum coherence is defined as
\begin{equation}
    C_{re}(\rho)  := \min_{\iota \in I} S(\rho||\iota) = S(\rho_{diag}) - S(\rho),
\end{equation}
where $S(\rho||\iota) = \Tr(\rho \ln (\rho - \iota) )$ is the relative entropy. In both cases, the closest incoherent state is given by $\iota = \rho_{diag} = \sum_j \rho_{jj} \ketbra{j}$.
\end{mydef}

Also, we already omitted the upper limits of the summations for convenience. The following theorem holds for any multipartite quantum system:
\begin{teo}
Let $\rho_{A_1,...,A_n}$ be a multipartite quantum state and let $\{\ket{i_m}_{A_m} \}_{i_m = 0}^{d_{A_m} - 1}$ be an orthonormal local basis for the Hilbert space $\mathcal{H}_{A_m}$, with $m = 1,...,n$. Then $C^c_{l_1}(\rho_{A_1...A_n}) \ge 0$, where $C^c_{l_1}$ is the correlated coherence using the $l_1$-norm as coherence measure.
\end{teo}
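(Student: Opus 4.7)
The plan is to prove the inequality $C_{l_1}(\rho_{A_1\ldots A_n})\ge\sum_{m=1}^n C_{l_1}(\rho_{A_m})$ by a direct matrix-element computation, using only the explicit formula $C_{l_1}(\rho)=\sum_{j\neq k}|\rho_{jk}|$ together with the partial-trace formula already written in the excerpt. The central observation will be that the off-diagonal elements of each reduced state $\rho_{A_m}$ only "see" off-diagonal entries of $\rho_{A_1\ldots A_n}$ of a very restricted type, and these types are disjoint across different $m$, so summing them cannot exceed the full off-diagonal mass.

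First I would expand $C_{l_1}(\rho_{A_m})$ using the partial-trace expression for $\rho^{A_m}_{i_m j_m}$:
\begin{equation*}
C_{l_1}(\rho_{A_m}) = \sum_{i_m\neq j_m}\Bigl|\sum_{i_\alpha,\,\alpha\neq m}\rho_{i_1\ldots i_m\ldots i_n,\, i_1\ldots j_m\ldots i_n}\Bigr|.
\end{equation*}
Then I apply the triangle inequality to pull the sum over the indices $i_\alpha$ ($\alpha\neq m$) outside the absolute value, obtaining
\begin{equation*}
C_{l_1}(\rho_{A_m})\le\sum_{i_m\neq j_m}\,\sum_{i_\alpha,\,\alpha\neq m}\bigl|\rho_{i_1\ldots i_m\ldots i_n,\,i_1\ldots j_m\ldots i_n}\bigr|.
\end{equation*}
The right-hand side is exactly the sum of $|\rho_{I,J}|$ over those index pairs $(I,J)=(i_1\ldots i_n,\,j_1\ldots j_n)$ that differ only in the $m$-th slot, i.e.\ the set $S_m:=\{(I,J):i_m\neq j_m,\ i_\alpha=j_\alpha\ \forall\alpha\neq m\}$.

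Next I would observe the key combinatorial point: the sets $S_1,\ldots,S_n$ are pairwise disjoint, since membership in $S_m$ requires $i_m\neq j_m$ while simultaneously forcing $i_{m'}=j_{m'}$ for every $m'\neq m$. Each $S_m$ is also a subset of the full off-diagonal index set $\{(I,J):I\neq J\}$. Hence summing the bounds over $m$ gives
\begin{equation*}
\sum_{m=1}^n C_{l_1}(\rho_{A_m})\le\sum_{m=1}^n\sum_{(I,J)\in S_m}|\rho_{I,J}|\le\sum_{I\neq J}|\rho_{I,J}|=C_{l_1}(\rho_{A_1\ldots A_n}),
\end{equation*}
which is precisely $C^c_{l_1}(\rho_{A_1\ldots A_n})\ge 0$.

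I do not expect any serious obstacle here; the proof is essentially bookkeeping. The only subtlety worth stating carefully is the disjointness of the $S_m$, since it is what prevents any off-diagonal element of the joint state from being counted twice when bounding the single-party coherences. It is also worth noting in passing that the slack in the triangle inequality shows the inequality is typically strict, consistent with the fact that $C_{l_1}^c$ does not vanish even on product states (a point emphasized in the introduction).
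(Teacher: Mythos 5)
Your proof is correct and follows essentially the same route as the paper's: apply the triangle inequality to the reduced-state matrix elements and observe that the resulting index sets (pairs differing only in the $m$-th slot) are disjoint subsets of the full off-diagonal index set, so the leftover terms are a sum of nonnegative quantities. The paper makes the same point by explicitly decomposing $\sum_{(i_1,\ldots,i_n)\neq(j_1,\ldots,j_n)}$ according to which slots differ; your phrasing via the disjoint sets $S_m$ is just a cleaner bookkeeping of the identical argument.
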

\begin{proof}
By definition
\begin{align}
    C^c_{l_1}(\rho_{A_1...A_n})& := C_{l_1}(\rho_{A_1...A_n}) - \sum_{m = 1}^n C_{l_1}(\rho_{A_m})\\
    & := \sum_{(i_1,...,i_n) \neq (j_1,...,j_n)} \abs{\rho_{i_1 ... i_n,j_1...j_n}} -  \sum_{m = 1}^n \sum_{i_m \neq j_m}\abs{\sum_{i_{\alpha}, \forall \alpha \neq m}\rho_{i_1 ...,i_m,..., i_n, i_1 ...,j_m,..., i_n}}, 
\end{align}
where
\begin{align}
    \sum_{(i_1,...,i_n) \neq (j_1,...,j_n)} \equiv
    \sum_{\overset{i_1 \neq j_1}{\overset{i_2 = j_2}{\overset{\vdots}{i_n = j_n}}}} + \sum_{\overset{i_1 = j_1}{\overset{i_2 \neq j_2}{\overset{\vdots}{i_n = j_n}}}} + ... + \sum_{\overset{i_1 = j_1}{\overset{i_2 = j_2}{\overset{\vdots}{i_n \neq j_n}}}} + \sum_{\overset{i_1 \neq j_1}{\overset{i_2 \neq j_2}{\overset{\vdots}{i_n = j_n}}}} + ... + \sum_{\overset{i_1 \neq j_1}{\overset{i_2 = j_2}{\overset{\vdots}{i_n \neq j_n}}}} + ... + \sum_{\overset{i_1 \neq j_1}{\overset{i_2 \neq j_2}{\overset{\vdots}{i_n \neq j_n}}}}.
\end{align}
Using the fact that $\abs{\sum_i z_i} \le \sum_i \abs{z_i}$ $\forall \ z_i \in \mathbb{C}$, we have
\begin{align}
    C^c_{l_1}(\rho_{A_1...A_n})&  \ge \sum_{(i_1,...,i_n) \neq (j_1,...,j_n)} \abs{\rho_{i_1 ... i_n,j_1...j_n}} -  \sum_{m = 1}^n \sum_{i_m \neq j_m}\sum_{i_{\alpha}, \forall \alpha \neq m}\abs{\rho_{i_1 ...,i_m,..., i_n, i_1 ...,j_m,..., i_n}}\\
    & = \Big( \sum_{\overset{i_1 \neq j_1}{\overset{i_2 \neq j_2}{\overset{\vdots}{i_n = j_n}}}} + ... + \sum_{\overset{i_1 \neq j_1}{\overset{i_2 = j_2}{\overset{\vdots}{i_n \neq j_n}}}} + ... + \sum_{\overset{i_1 \neq j_1}{\overset{i_2 \neq j_2}{\overset{\vdots}{i_n \neq j_n}}}}\Big)\abs{\rho_{i_1 ... i_n,j_1...j_n}}\\
    & \ge 0, 
\end{align}
once we have the sum of non negative real numbers.
\end{proof}   

It's worth pointing out that the theorem stated above was first proved in \cite{Tan} for the bipartite case, and implicitly proved in \cite{Jiang} for the three-qubit case, since they proved the following result: $C_{l_1}(\rho_{ABC}) \ge C_{l_1}(\rho_{A}) + C_{l_1}(\rho_{B}) + C_{l_1}(\rho_{C})$.  

Now, let's restrict ourselves to a bipartite quantum system. Then, for a separable uncorrelated quantum state $    \rho_{A,B} = \rho_A \otimes \rho_B = \sum_{i,j,k,l} \rho_{ik}^A \rho_{jl}^B \ket{i,j}_A\bra{k,l}$, the correlated coherence is not necessarily zero:
\begin{equation}
C_{l_1}^c(\rho_{A}\otimes \rho_B) = \sum_{(i,j) \neq (k,l)}\abs{\rho_{ik}^A} \abs{\rho_{jl}^B} - \sum_{i \neq k}\abs{\rho_{ik}^A} - \sum_{j \neq l}\abs{\rho_{jl}^A} = \sum_{\overset{i \neq k}{j \neq l}} \abs{\rho_{ik}^A} \abs{\rho_{jl}^B}= C_{l_1}(\rho_{A})C_{l_1}(\rho_B)\ge 0,
\end{equation}
which implies that $C_{l_1}(\rho_{A}\otimes \rho_B) \neq C_{l_1}(\rho_A) + C_{l_1}(\rho_B)$, as mentioned before (for a related discussion, see \cite{maziero}). Analogously, for a separable state $    \rho_{AB} = \sum_{\alpha} p_{\alpha} \rho_{\alpha}^A \otimes \rho_{\alpha}^B = \sum_{\alpha} \sum_{i,j,k,l} p_{\alpha} \rho_{{\alpha},ik}^A \rho_{{\alpha},jl}^B \ket{i,j}_A\bra{k,l}$, we have
\begin{equation}
    C_{l_1}^c(\rho_{AB}) = \sum_{\overset{i \neq k}{j \neq l}} \sum_{\alpha}p_{\alpha}\abs{\rho_{\alpha,ik}^A} \abs{\rho_{\alpha, jl}^B} \ge 0. 
\end{equation}

However, using the relative entropy of coherence as  the measure of correlated coherence \cite{Kraft}, since $S(\bigotimes_m \rho_{A_m}) = \sum_m S(\rho_{A_m})$, we have $C_{re}(\bigotimes_m \rho_{A_m}) = \sum_m C_{re}(\rho_{A_m})$, and therefore the relative entropy of correlated coherence satisfies $C_{re}^c(\rho_{A_1,...,A_n}) = 0$ for $\rho_{A_1,...,A_n} = \bigotimes_m \rho_{A_m}$. More generally, it is worth mentioning that R\'enyi's entropy is also additive \cite{Shao, Zhu}, hence correlated coherence measures based on R\'enyi's entropy must satisfy such relation.

\section{Monogamy and trade-off relations for correlated coherence}
\label{sec:tra}
\subsection{$l_1$-norm correlated coherence}
In \cite{Jiang}, the authors proved that the conjecture $C(\rho_{ABC}) \ge C(\rho_{AB}) + C(\rho_{AC})$, made by Yao et al. \cite{Yao}, for the $l_1$-norm is invalid.  They considered a counter example using the following the quantum state: $\ket{\Psi} = a_{000}\ket{0,0,0}_{A,B,C} + a_{100}\ket{1,0,0}_{A,B,C}$. However, it is interesting to note that for the correlated coherence this type of trade-off relation holds for the quantum state mentioned above. Since $C^c_{l_1}(\rho_{ABC}) \ge C^c_{l_1}(\rho_{AB}) + C^c_{l_1}(\rho_{AC})$ implies $C_{l_1}(\rho_{ABC}) + C_{l_1}(\rho_A) \ge C_{l_1}(\rho_{AB}) + C_{l_1}(\rho_{AC})$, and $C_{l_1}(\rho_{ABC}) = C_{l_1}(\rho_{A}) = C_{l_1}(\rho_{AB}) = C_{l_1}(\rho_{AC}) = 2\abs{a_{000}a^*_{100}}$. More generally, we have the following theorem:
\begin{teo}
 Let $\rho_{ABC}$ be a tripartite quantum state and let $\{\ket{i}_{A} \}_{i = 0}^{d_{A} - 1}$,$\{\ket{j}_{B} \}_{j = 0}^{d_{B} - 1}$,$\{\ket{k}_{C} \}_{k = 0}^{d_{C} - 1}$ be a orthonormal local basis of  $\mathcal{H}_{A}$, $\mathcal{H}_{B}$, and $\mathcal{H}_{C}$, respectively. If the reduced quantum system $\rho_A = \Tr_{B,C}{\rho_{A,B,C}}$ satisfies
\begin{equation}
    C_{l_1}(\rho_A) = \sum_{i \neq l}\abs{\sum_{j,k}\rho_{ijk,ljk}} =  \sum_{i \neq l}\sum_{j,k}\abs{\rho_{ijk,ljk}}, 
\end{equation}
then
\begin{equation}
    C^c_{l_1}(\rho_{ABC}) \ge C^c_{l_1}(\rho_{AB}) + C^c_{l_1}(\rho_{AC}).
\end{equation}
\end{teo}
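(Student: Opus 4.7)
The plan is to first reduce the claimed monogamy inequality to an equivalent statement in terms of the ordinary $l_1$-norm coherence. By the definition of correlated coherence,
\begin{equation*}
C^c_{l_1}(\rho_{ABC}) \ge C^c_{l_1}(\rho_{AB}) + C^c_{l_1}(\rho_{AC})
\end{equation*}
is equivalent, after cancelling the $C_{l_1}(\rho_B)$ and $C_{l_1}(\rho_C)$ terms on both sides, to
\begin{equation*}
C_{l_1}(\rho_{ABC}) + C_{l_1}(\rho_A) \ge C_{l_1}(\rho_{AB}) + C_{l_1}(\rho_{AC}).
\end{equation*}
This is the inequality I would aim to prove.

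Next, I would partition the off-diagonal index set of $\rho_{ABC}$ according to whether each of the three index pairs is equal or unequal. Concretely, for $\alpha,\beta,\gamma\in\{=,\neq\}$, let $S_{\alpha\beta\gamma}$ denote the sum of $|\rho_{ijk,lmn}|$ over all indices with $i\,\alpha\,l$, $j\,\beta\,m$, $k\,\gamma\,n$. Then $C_{l_1}(\rho_{ABC})$ is precisely the sum of the seven $S_{\alpha\beta\gamma}$ with $(\alpha,\beta,\gamma)\neq(=,=,=)$. The hypothesis of the theorem says exactly that $C_{l_1}(\rho_A)=S_{\neq,=,=}$, i.e.\ that the triangle inequality used in computing $C_{l_1}(\rho_A)$ from the entries of $\rho_{ABC}$ is saturated.

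Then I would apply the triangle inequality $|\sum_k z_k|\le\sum_k|z_k|$ to the matrix elements of $\rho_{AB}=\Tr_C\rho_{ABC}$ and $\rho_{AC}=\Tr_B\rho_{ABC}$, obtaining
\begin{align*}
C_{l_1}(\rho_{AB}) &\le S_{\neq,=,=} + S_{=,\neq,=} + S_{\neq,\neq,=}, \\
C_{l_1}(\rho_{AC}) &\le S_{\neq,=,=} + S_{=,=,\neq} + S_{\neq,=,\neq}.
\end{align*}
Substituting these upper bounds, together with $C_{l_1}(\rho_A)=S_{\neq,=,=}$, into the desired inequality reduces it to checking that
\begin{equation*}
\sum_{(\alpha,\beta,\gamma)\neq(=,=,=)}\!\!\!S_{\alpha\beta\gamma} \;+\; S_{\neq,=,=} \;\ge\; 2S_{\neq,=,=} + S_{=,\neq,=} + S_{=,=,\neq} + S_{\neq,\neq,=} + S_{\neq,=,\neq},
\end{equation*}
which simplifies to $S_{=,\neq,\neq}+S_{\neq,\neq,\neq}\ge 0$, and this is manifest because each $S_{\alpha\beta\gamma}$ is a sum of moduli.

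The conceptual subtlety, which I expect to be the only delicate point, is recognizing why the hypothesis is precisely what the argument needs: the triangle-inequality bounds on $C_{l_1}(\rho_{AB})$ and $C_{l_1}(\rho_{AC})$ each contribute a copy of $S_{\neq,=,=}$, so the RHS carries this term twice, while the LHS carries it only once inside $C_{l_1}(\rho_{ABC})$; promoting the generally valid inequality $C_{l_1}(\rho_A)\le S_{\neq,=,=}$ to equality (as the hypothesis does) is exactly what supplies the missing copy on the LHS. Everything else is a careful bookkeeping of the seven partition classes, and no additional structure of the state is required.
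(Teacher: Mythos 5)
Your proposal is correct and follows essentially the same route as the paper's own proof: reduce to $C_{l_1}(\rho_{ABC}) + C_{l_1}(\rho_A) \ge C_{l_1}(\rho_{AB}) + C_{l_1}(\rho_{AC})$, bound $C_{l_1}(\rho_{AB})$ and $C_{l_1}(\rho_{AC})$ via the triangle inequality in terms of the seven index classes, and use the hypothesis to cancel the doubled $S_{\neq,=,=}$ term, leaving $S_{=,\neq,\neq}+S_{\neq,\neq,\neq}\ge 0$. Your bookkeeping matches the paper's displayed chain of inequalities term for term, and your closing remark correctly identifies why the hypothesis is exactly what the argument needs.
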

\begin{proof}
\begin{align}
    C^c_{l_1}(\rho_{ABC}) & - C^c_{l_1}(\rho_{AB}) - C^c_{l_1}(\rho_{AC}) = C_{l_1}(\rho_{ABC}) + C_{l_1}(\rho_A) - C_{l_1}(\rho_{AB}) - C_{l_1}(\rho_{AC})\\
    & = \sum_{(i,j,k) \neq (l,m,n)} \abs{\rho_{ijk,lmn}} + \sum_{i \neq l}\abs{\sum_{j,k}\rho_{ijk,ljk}} - \sum_{(i,j) \neq (l,m)} \abs{\sum_k \rho_{ijk,lmk}} - \sum_{(i,k) \neq (l,n)} \abs{\sum_j \rho_{ijk,ljn}} \\
    &\ge \Big(\sum_{\overset{i \neq l}{\overset{j \neq m}{k \neq n}}} + \sum_{{\overset{i = l}{\overset{j \neq m}{k \neq n}}}} - \sum_{\overset{i \neq l}{\overset{j = m}{k = n}}} \Big) \abs{\rho_{ijk,lmn}} + \sum_{i \neq l}\abs{\sum_{j,k}\rho_{ijk,ljk}}\\
    & = \Big(\sum_{\overset{i \neq l}{\overset{j \neq m}{k \neq n}}} + \sum_{{\overset{i = l}{\overset{j \neq m}{k \neq n}}}}\Big) \abs{\rho_{ijk,lmn}}\\
    & \ge 0, 
\end{align}
Above we used the fact that $\abs{\sum_i z_i} \le \sum_i \abs{z_i}$, $\forall \  z_i \in \mathbb{C}$. This completes the proof.
\end{proof}
 Now, following the same reasoning:
 \begin{teo}
If the reduced quantum systems $\rho_A, \rho_B, \rho_C$ satisfy
\begin{align}
    & C_{l_1}(\rho_A) = \sum_{i \neq l}\abs{\sum_{j,k}\rho_{ijk,ljk}} =  \sum_{i \neq l}\sum_{j,k}\abs{\rho_{ijk,ljk}}, \label{eq:ca}\\
    & C_{l_1}(\rho_B) = \sum_{j \neq m}\abs{\sum_{j,k}\rho_{ijk,imk}} =  \sum_{j \neq m}\sum_{i,k}\abs{\rho_{ijk,imk}},\label{eq:cb} \\
    & C_{l_1}(\rho_C) = \sum_{k \neq n}\abs{\sum_{i,j}\rho_{ijk,ijn}} = \sum_{k \neq n}\sum_{i,j}\abs{\rho_{ijk,ijn}}, \label{eq:cc}\\
\end{align}
then 
\begin{equation}
    C^c_{l_1}(\rho_{ABC}) \ge C^c_{l_1}(\rho_{AB}) + C^c_{l_1}(\rho_{AC}) + C^c_{l_1}(\rho_{BC}).
\end{equation}
\end{teo}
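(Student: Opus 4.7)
The plan is to mimic the proof of the previous theorem, but three applications of the triangle inequality are now needed and all three hypotheses (\ref{eq:ca})--(\ref{eq:cc}) will be used. First I would expand the quantity of interest using only the definition of correlated coherence. Since the single-party terms in the four correlated coherences combine with net coefficient $+1$ on each subsystem, one obtains
\begin{align*}
    C^c_{l_1}(\rho_{ABC}) &- C^c_{l_1}(\rho_{AB}) - C^c_{l_1}(\rho_{AC}) - C^c_{l_1}(\rho_{BC}) \\
    &= C_{l_1}(\rho_{ABC}) + C_{l_1}(\rho_A) + C_{l_1}(\rho_B) + C_{l_1}(\rho_C) - C_{l_1}(\rho_{AB}) - C_{l_1}(\rho_{AC}) - C_{l_1}(\rho_{BC}).
\end{align*}
Next I would invoke $\abs{\sum_i z_i}\le\sum_i\abs{z_i}$ to upper-bound each bipartite coherence by the corresponding sum of absolute values of tripartite matrix elements; for instance $C_{l_1}(\rho_{AB})\le\sum_{(i,j)\ne(l,m)}\sum_k \abs{\rho_{ijk,lmk}}$, and analogously for $\rho_{AC}$ and $\rho_{BC}$.

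The core of the argument is then a combinatorial partition of the off-diagonal index set. I would split $\{(i,j,k,l,m,n):(i,j,k)\ne(l,m,n)\}$ into the seven disjoint classes determined by which of the equalities $i=l$, $j=m$, $k=n$ hold, and denote the corresponding contributions to $C_{l_1}(\rho_{ABC})$ by $S_1,\ldots,S_7$. Here $S_1,S_2,S_3$ are the three classes in which only a single index pair differs (the A-only, B-only, and C-only off-diagonals), and $S_7$ is the class in which all three pairs differ. A direct check shows that the $AB$ upper bound collects exactly the classes with $k=n$, giving $S_1+S_2+S_4$; the $AC$ bound, where $j=m$, gives $S_1+S_3+S_5$; and the $BC$ bound, where $i=l$, gives $S_2+S_3+S_6$. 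Subtracting the three upper bounds from $C_{l_1}(\rho_{ABC})=S_1+\cdots+S_7$ therefore leaves $S_7-S_1-S_2-S_3$.

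Finally, the hypotheses (\ref{eq:ca})--(\ref{eq:cc}) are exactly the statements $S_1=C_{l_1}(\rho_A)$, $S_2=C_{l_1}(\rho_B)$, $S_3=C_{l_1}(\rho_C)$, which cancel the three remaining single-party terms in the expansion above, leaving $S_7\ge 0$ and concluding the argument. I expect the main obstacle to be purely combinatorial: verifying that, across the three triangle-inequality bounds, each of $S_1,S_2,S_3$ is counted exactly twice, each of $S_4,S_5,S_6$ exactly once, and $S_7$ not at all. No new analytic estimate beyond that of Theorem 2 is required; the hypotheses are invoked in the same way but now simultaneously for all three marginals, which is precisely why an assumption on $\rho_A$ alone no longer suffices in the three-way trade-off.
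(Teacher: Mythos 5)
Your proposal is correct and follows essentially the same route as the paper: the same expansion showing the single-party coherences enter with net coefficient $+1$, the same triangle-inequality bounds on the three bipartite coherences, and the same combinatorial bookkeeping over the seven off-diagonal index classes, with the three hypotheses supplying exactly the cancellation of $S_1$, $S_2$, $S_3$ and leaving $S_7\ge 0$. Your multiplicity count (each single-index class counted twice across the three bounds, each double-index class once, $S_7$ never) is precisely what the paper's displayed sums encode.
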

 \begin{proof}
 \begin{align}
    C^c_{l_1}(\rho_{ABC})& - C^c_{l_1}(\rho_{AB}) - C^c_{l_1}(\rho_{AC}) - C^c_{l_1}(\rho_{BC})  = C_{l_1}(\rho_{ABC}) + \sum_{\alpha = A,B,C} C_{l_1}(\rho_\alpha) -  \sum_{\alpha < \beta = A,B,C}C_{l_1}(\rho_{\alpha \beta})\\
    & \ge \Big(\sum_{\overset{i \neq l}{\overset{j \neq m}{k \neq n}}} - \sum_{\overset{i \neq l}{\overset{j = m}{k = n}}} - \sum_{\overset{i = l}{\overset{j \neq m}{k = n}}} - \sum_{\overset{i = l}{\overset{j = m}{k \neq n}}} \Big) \abs{\rho_{ijk,lmn}} + \sum_{i \neq l}\abs{\sum_{j,k}\rho_{ijk,ljk}} + \sum_{j \neq m}\abs{\sum_{i,k}\rho_{ijk,imk}} + \sum_{l \neq n}\abs{\sum_{i,j}\rho_{ijk,ijn}}\\
    & = \sum_{\overset{i \neq l}{\overset{j \neq m}{k \neq n}}}\abs{\rho_{ijk,lmn}}\\
    & \ge 0.
\end{align}
\end{proof}
This kind of trade-off relation is equivalent to the monogamy relation expressed by the Eq. (\ref{eq:mon}) for the correlated coherence.
\begin{teo}
A tripartite quantum system that satisfies the trade-off relation
\begin{equation}
    C^c(\rho_{ABC}) \ge C^c(\rho_{AB}) + C^c(\rho_{AC}) + C^c(\rho_{BC})
\end{equation}
also satisfies the monogamy relation
\begin{equation}
    C^c(\rho_{A|BC}) \ge C^c(\rho_{AB}) + C^c(\rho_{AC}),
\end{equation}
for any coherence measure, where $C^c(\rho_{A|BC}) := C^c(\rho_{ABC}) - C(\rho_{A}) - C(\rho_{BC}) $ denotes the correlated coherence between $A$ and $BC$. 
\end{teo}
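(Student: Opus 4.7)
The plan is to derive the monogamy relation as a purely algebraic consequence of the definition of correlated coherence: I would expand both the hypothesis and the desired inequality in terms of the underlying coherence measure $C$ applied to the various marginals, and then observe that the two are equivalent after cancellation. No analytic property of $C$ beyond the definition of $C^c$ will be needed, which is precisely why the statement can be claimed for ``any coherence measure.''

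Concretely, I would first read $C^c(\rho_{A|BC})$ via the bipartite form of the definition applied to the grouping $\{A,BC\}$, so that
\[
C^c(\rho_{A|BC}) = C(\rho_{ABC}) - C(\rho_A) - C(\rho_{BC}),
\]
which is the only interpretation consistent with the definition of $C^c$ in Sec.~\ref{sec:pre} (treating $BC$ as a single subsystem). Then I would expand the right-hand side of the claimed monogamy relation via $C^c(\rho_{XY}) = C(\rho_{XY}) - C(\rho_X) - C(\rho_Y)$ for the pairs $AB$ and $AC$. After cancelling the copy of $C(\rho_A)$ that appears on both sides, the target inequality collapses to
\[
C(\rho_{ABC}) + C(\rho_A) + C(\rho_B) + C(\rho_C) \ge C(\rho_{AB}) + C(\rho_{AC}) + C(\rho_{BC}).
\]

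Next I would expand the hypothesis in the same way: writing $C^c(\rho_{ABC}) = C(\rho_{ABC}) - C(\rho_A) - C(\rho_B) - C(\rho_C)$ on the left and each $C^c(\rho_{\alpha\beta}) = C(\rho_{\alpha\beta}) - C(\rho_\alpha) - C(\rho_\beta)$ on the right. Each of $C(\rho_A), C(\rho_B), C(\rho_C)$ appears exactly twice in the expanded right-hand side; transposing these terms to the left yields precisely the inequality displayed above. The hypothesis and the target are therefore equivalent, and the implication follows in a single line.

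Because the argument is purely bookkeeping, I do not foresee any genuine obstacle; the only place to be careful is in tallying the multiplicities of the three single-subsystem terms $C(\rho_A), C(\rho_B), C(\rho_C)$ after the full expansion of the hypothesis, so that the rearrangement into the common inequality is performed without error.
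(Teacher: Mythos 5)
Your proposal is correct and follows essentially the same route as the paper: both expand everything in terms of the underlying measure $C$, cancel the single-subsystem terms, and observe that $C^c(\rho_{A|BC}) - C^c(\rho_{AB}) - C^c(\rho_{AC})$ coincides with $C^c(\rho_{ABC}) - C^c(\rho_{AB}) - C^c(\rho_{AC}) - C^c(\rho_{BC})$, so the two inequalities are in fact equivalent. Your reading $C^c(\rho_{A|BC}) = C(\rho_{ABC}) - C(\rho_A) - C(\rho_{BC})$ (rather than the literal $C^c(\rho_{ABC}) - C(\rho_A) - C(\rho_{BC})$ printed in the statement, which appears to be a typo) is exactly the interpretation the paper's own computation uses, and is the one under which the bookkeeping closes.
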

\begin{proof}
The proof follows directly from the definition
\begin{align}
     C^c(\rho_{A|BC}) - C^c(\rho_{AB}) - C^c(\rho_{AC}) & = C(\rho_{ABC}) + \sum_{\alpha = A,B,C} C(\rho_\alpha) -  \sum_{\alpha < \beta = A,B,C}C(\rho_{\alpha \beta}) \\
     & = C^c(\rho_{ABC}) - C^c(\rho_{AB}) - C^c(\rho_{AC}) - C^c(\rho_{BC})\\
     & \ge 0.
\end{align}
\end{proof}
For instance, the pure quantum state $\ket{G,H,Z,W}_{A,B,C} = \lambda_1 \ket{0,0,0}_{A,B,C} + \lambda_2 \ket{0,0,1}_{A,B,C} + \lambda_3\ket{0,1,0}_{A,B,C} + \lambda_4\ket{1,0,0}_{A,B,C} + \lambda_5 \ket{1,1,1}_{A,B,C}$, with $\abs{\lambda_1}^2 + \abs{\lambda_2}^2 + \abs{\lambda_3}^2 + \abs{\lambda_4}^2 + \abs{\lambda_5}^2= 1$, satisfies the monogamy relation for the correlated coherence. Also, for the state $\ket{\Phi}_{A,B,C} = a_{000} \ket{0,0,0}_{A,B,C} + a_{101}\ket{1,0,1}_{A,B,C} + a_{1,1,0}\ket{1,1,0}_{A,B,C} + a_{111}\ket{1,1,1}_{A,B,C}$ such that $\abs{a_{000}}^2 + \abs{a_{101}}^2 + \abs{a_{110}}^2 + \abs{a_{111}}^2 = 1$, we have $C^c_{l_1}(\rho_{A|BC}) - C^c_{l_1}(\rho_{AB}) - C^c_{l_1}(\rho_{AC}) = 2 \abs{a_{000}a^*_{111}} \ge 0$. This state was considered by Giorgi \cite{Giorgi} in the form $\ket{\Phi(p,\epsilon)} = \sqrt{p \epsilon} \ket{0,0,0}_{A,B,C} + \sqrt{p(1 - \epsilon)}\ket{1,1,1}_{A,B,C} + \sqrt{(1 - p)/2}(\ket{1,1,0}_{A,B,C} + \ket{1,0,1}_{A,B,C})$, with $p, \epsilon \in [0,1]$. Hence, for the correlated coherence, $\ket{\Phi(p,\epsilon)}$ is monogamous for any value of $p, \epsilon \in [0,1]$. In Fig. \ref{fig:mes}, we plotted $M := C^c_{l_1}(\rho_{A|BC}) - C^c_{l_1}(\rho_{AB}) - C^c_{l_1}(\rho_{AC})$ as function of $p$ and $\epsilon$.

\begin{figure}[t]
\includegraphics[scale=0.6]{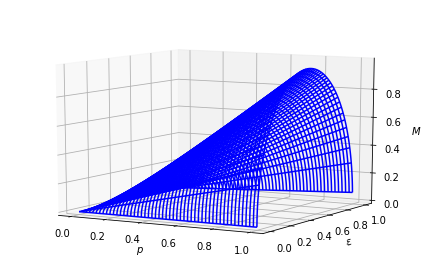}
\caption{The function $M := C^c_{l_1}(\rho_{A|BC}) - C^c_{l_1}(\rho_{AB}) - C^c_{l_1}(\rho_{AC})$ as function of $p$ and $\epsilon$ for the state $\ket{\Phi(p,\epsilon)}$. }
\label{fig:mes}
\end{figure}

Also, let's consider the following state  $\ket{\Psi}_{A,B,C} = \lambda_1 \ket{0,0,0}_{A,B,C} + \lambda_2\ket{0,1,1}_{A,B,C} + \lambda_3 \ket{1,0,0}_{A,B,C} + \lambda_4 \ket{1,1,1}_{A,B,C}$, with $\abs{\lambda_1}^2 + \abs{\lambda_2}^2 + \abs{\lambda_3}^2 + \abs{\lambda_4}^2 = 1$ \cite{Acin}, such that $C_{l_1}(\rho_A) = 2\abs{\lambda_1 \lambda_3^* + \lambda_2 \lambda_4^*} \neq 2(\abs{\lambda_1 \lambda_3^*} + \abs{\lambda_2 \lambda_4^*})$. Then
\begin{align}
    C^c_{l_1}(\rho_{A|BC}) - C^c_{l_1}(\rho_{AB}) - C^c_{l_1}(\rho_{AC}) & = 2( \abs{\lambda_1 \lambda^*_4} + \abs{\lambda_2 \lambda^*_3}) + 2(\abs{\lambda_1 \lambda^*_3 + \lambda_2 \lambda^*_4} - \abs{\lambda_1 \lambda^*_3} - \abs{\lambda_2 \lambda^*_4})\\& + 2(\abs{\lambda_1 \lambda^*_2} + \abs{\lambda_3 \lambda^*_4}- \abs{\lambda_1 \lambda^*_2 + \lambda_3 \lambda^*_4})\nonumber \\
    & \ge 2( \abs{\lambda_1 \lambda^*_4} + \abs{\lambda_2 \lambda^*_3}) + 2(\abs{\lambda_1 \lambda^*_3 + \lambda_2 \lambda^*_4} - \abs{\lambda_1 \lambda^*_3} - \abs{\lambda_2 \lambda^*_4})\\
    & \ge 0.
\end{align}
To prove the last passage, let's suppose, by contradiction, that $\abs{\lambda_1 \lambda^*_3} + \abs{\lambda_2 \lambda^*_4} \ge \abs{\lambda_1 \lambda^*_4} + \abs{\lambda_2 \lambda^*_3} + \abs{\lambda_1 \lambda^*_3 + \lambda_2 \lambda^*_4}$. Squaring this expression, we obtain
\begin{align}
    0 & \ge \abs{\lambda_1 \lambda^*_4}^2 + \abs{\lambda_2 \lambda^*_3}^2 + 2(\abs{\lambda_1 \lambda^*_3} + \abs{\lambda_2 \lambda^*_4})\abs{\lambda_1 \lambda^*_3 + \lambda_2 \lambda^*_4} + 2\mathbb{R}e(\lambda_1 \lambda_3^* \lambda_2 \lambda_4^*)\\
    & = \abs{\lambda_1 \lambda_3^* + \lambda_2 \lambda_4^*}^2 + 2(\abs{\lambda_1 \lambda^*_3} + \abs{\lambda_2 \lambda^*_4})\abs{\lambda_1 \lambda^*_3 + \lambda_2 \lambda^*_4}, 
\end{align}
which is an absurd because the right-hand side is the sum of positive real numbers. On the other hand, if the coefficients $\{\lambda_i\}_{i = 1}^{4}$ are real and positive, the monogamy is obviously satisfied. We can check this considering the state in the form $\ket{\Psi(p,\epsilon)} = \sqrt{p \epsilon} \ket{0,0,0}_{A,B,C} + \sqrt{p(1 - \epsilon)}\ket{1,1,1}_{A,B,C} + \sqrt{(1 - p)/2}(\ket{1,0,0}_{A,B,C} + \ket{0,1,1}_{A,B,C})$, with $p, \epsilon \in [0,1]$, where, in Fig. \ref{fig:mes1}, we plotted $M := C^c_{l_1}(\rho_{A|BC}) - C^c_{l_1}(\rho_{AB}) - C^c_{l_1}(\rho_{AC})$ as function of $p$ and $\epsilon$.
Therefore, the conditions expressed by the equations (\ref{eq:ca}), (\ref{eq:cb}), and (\ref{eq:cc}) are sufficient but not necessary, and seems reasonably to conjecture that the monogamy relation $C^c_{l_1}(\rho_{A|BC}) \ge C^c_{l_1}(\rho_{AB}) + C^c_{l_1}(\rho_{AC})$ holds for any tripartite pure quantum state. Finally, it is possible to establish a weaker trade-off relation for an arbitrary tripartite quantum state, i.e.,
 \begin{align}
     C_{l_1}^c(\rho_{ABC}) \ge \frac{1}{2}\Big(C_{l_1}^c(\rho_{AB}) + C_{l_1}^c(\rho_{AC}) + C_{l_1}^c(\rho_{BC})\Big),
 \end{align}
once
\begin{align}
     C_{l_1}^c(\rho_{ABC}) - \frac{1}{2}\Big(C_{l_1}^c(\rho_{AB}) + C_{l_1}^c(\rho_{AC}) + C_{l_1}^c(\rho_{BC})\Big) & =      C_{l_1}(\rho_{ABC}) - \frac{1}{2}\Big(C_{l_1}(\rho_{AB}) + C_{l_1}(\rho_{AC}) + C_{l_1}(\rho_{BC})\Big)\\
     & \ge \Big(\sum_{\overset{i \neq l}{\overset{j \neq m}{k \neq n}}} + \frac{1}{2}(\sum_{\overset{i = l}{\overset{j \neq m}{k \neq n}}} + \sum_{\overset{i \neq l}{\overset{j = m}{k \neq n}}} + \sum_{\overset{i \neq l}{\overset{j \neq m}{k = n}}})\Big) \abs{\rho_{ijk,lmn}}\\
     & \ge 0.
\end{align}
\begin{figure}[t]
\includegraphics[scale=0.6]{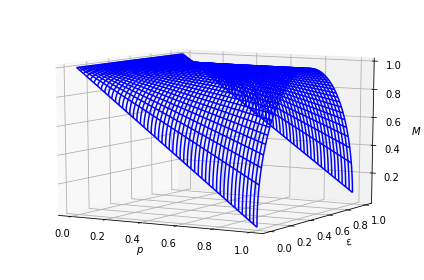}
\caption{The function $M := C^c_{l_1}(\rho_{A|BC}) - C^c_{l_1}(\rho_{AB}) - C^c_{l_1}(\rho_{AC})$ as function of $p$ and $\epsilon$ for the state $\ket{\Psi(p,\epsilon)}$.}
\label{fig:mes1}
\end{figure}

\subsection{Relative entropy of correlated coherence}
First, we'll recall an upper bound for the correlated coherence of a bipartite quantum system $\rho_{AB}$ \cite{Ma}:
 \begin{align}
      C_{re}^c(\rho_{AB})& = C_{re}(\rho_{AB}) - C_{re}(\rho_{A}) - C_{re}(\rho_{B})\\
      & = S(\rho_{{AB}_{diag}}) - S(\rho_{{AB}}) - S(\rho_{{A}_{diag}}) + S(\rho_{A}) - S(\rho_{{B}_{diag}}) + S(\rho_{{B}})\\
      & \le S(\rho_{A}) + S(\rho_{{B}}) - S(\rho_{{AB}}),
 \end{align}
since $S(\rho_{{AB}_{diag}}) -  S(\rho_{{A}_{diag}}) - S(\rho_{{B}_{diag}}) \le 0$, by the subadditivity of von Neumann's entropy \cite{nielsen}. In addition, Xi et al. \cite{Xi} proved that $ C_{re}(\rho_{AB}) \ge C_{re}(\rho_{A}) + C_{re}(\rho_{B})$, by using the fact that local operations never increase the relative entropy, which implies $ C_{re}^c(\rho_{AB}) \ge 0$. Thus \begin{equation}
    0 \le C_{re}^c(\rho_{AB})  \le S(\rho_{A}) + S(\rho_{{B}}) - S(\rho_{{AB}}) := \mathcal{I}_{A,B},
\end{equation}
where $\mathcal{I}_{A,B}$ is the quantum mutual information, which is the total amount of correlations in any bipartite quantum state \cite{Groisman, Schumacher}. Also, it's known that the quantification of quantum coherence depends on a particular reference basis, i.e., these quantum coherence measures are basis-dependent \cite{Byrnes}. Following Ma et al. \cite{Ma}, we'll define the basis-independent relative entropy of coherence, or intrinsic relative entropy of quantum coherence (IREQC). Since the maximally mixed state, $I/d$, is the only basis-independent incoherent state, the IREQC is defined by taken the maximally mixed state as the reference incoherent state, i.e., 
\begin{equation}
C^I_{re}(\rho) := S(\rho||I/d) = \log d - S(\rho),
\end{equation}
where $S(\rho||I/d) = \Tr(\rho \ln \rho - \rho \ln I/d)$ is the relative entropy. Now, defining the intrinsic relative entropy of correlated coherence (IRECC) for a bipartite quantum system as \begin{equation}
C^{Ic}_{re}(\rho_{AB}):= C^{I}_{re}(\rho_{AB}) - C^{I}_{re}(\rho_{A}) - C^{I}_{re}(\rho_{A}),
\end{equation}
we see it is equal to the total amount of correlations in any bipartite quantum system:
\begin{teo}
 The intrinsic relative entropy of correlated coherence of a bipartite quantum system, $C_{re}^{Ic}(\rho_{AB})$, is equal to the quantum mutual information $\mathcal{I}_{A,B} = S(\rho_A) + S(\rho_B) - S(\rho_{AB})$.
\end{teo}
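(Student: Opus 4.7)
The plan is to prove the identity by direct substitution of the definition of the intrinsic relative entropy of coherence $C_{re}^I(\rho) = \log d - S(\rho)$ into the definition of the intrinsic relative entropy of correlated coherence, and then exploit the multiplicativity of the Hilbert-space dimensions under tensor product.

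First, I would expand the right-hand side of
\begin{equation}
    C_{re}^{Ic}(\rho_{AB}) = C_{re}^{I}(\rho_{AB}) - C_{re}^{I}(\rho_{A}) - C_{re}^{I}(\rho_{B})
\end{equation}
by inserting the explicit form $C_{re}^{I}(\rho) = \log d - S(\rho)$ for each of the three terms, where $d_{AB} = d_A d_B$, $d_A = \dim \mathcal{H}_A$, and $d_B = \dim \mathcal{H}_B$. This produces a combination of three logarithm terms and three von Neumann entropies.

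Next, I would use $\log d_{AB} = \log(d_A d_B) = \log d_A + \log d_B$ to see that the logarithmic contributions cancel exactly, leaving only the entropy part
\begin{equation}
    C_{re}^{Ic}(\rho_{AB}) = -S(\rho_{AB}) + S(\rho_A) + S(\rho_B),
\end{equation}
which is precisely the quantum mutual information $\mathcal{I}_{A,B}$. I would also briefly note that this matches the upper bound $0 \le C_{re}^c(\rho_{AB}) \le \mathcal{I}_{A,B}$ derived earlier, with the bound now saturated because using the maximally mixed state as reference removes the basis-dependent piece $S(\rho_{AB_{\mathrm{diag}}}) - S(\rho_{A_{\mathrm{diag}}}) - S(\rho_{B_{\mathrm{diag}}})$ that was causing the inequality to be strict in general.

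The argument is essentially a one-line computation, so there is no real obstacle. The only thing to keep straight is the bookkeeping of the dimensions: one must confirm that the relevant $\log d$ terms indeed combine multiplicatively under tensor products, which is immediate for finite-dimensional Hilbert spaces. No nontrivial coherence-theoretic identities or entropy inequalities are required beyond the definition of $C_{re}^I$ and basic properties of the logarithm.
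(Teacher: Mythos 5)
Your proposal is correct and follows the same route as the paper's proof: direct substitution of $C_{re}^{I}(\rho)=\log d - S(\rho)$ into the definition of $C_{re}^{Ic}$, with the $\log d$ terms cancelling via $\log(d_A d_B)=\log d_A + \log d_B$ to leave exactly $\mathcal{I}_{A,B}$. The additional remark about saturating the earlier upper bound is a nice observation but not part of the paper's argument.
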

\begin{proof}
The result follows directly from the definition:
\begin{align}
    C_{re}^{Ic}(\rho_{AB}) & = C_{re}^{I}(\rho_{AB}) - C_{re}^{I}(\rho_{A}) - C_{re}^{I}(\rho_{B}) = S(\rho_A) + S(\rho_B) - S(\rho_{AB})\\
    & = \mathcal{I}_{A,B}.
\end{align}
\end{proof}
Now, by considering a tripartite quantum system, the IRECC is given by $C^{Ic}_{re}(\rho_{ABC}) = C^{I}_{re}(\rho_{ABC}) - \sum_{\alpha = A,B,C}C^{I}_{re}(\rho_{\alpha})$. Hence, we have the following trade-off relation
\begin{align}
C^{Ic}_{re}(\rho_{ABC})& - C^{Ic}_{re}(\rho_{AB}) - C^{Ic}_{re}(\rho_{AC}) = C^{I}_{re}(\rho_{ABC}) + C^{I}_{re}(\rho_{A}) - C^{I}_{re}(\rho_{AB}) - C^{Ic}_{re}(\rho_{AB})\\
& = \log(d_A d_B d_C) - S(\rho_{ABC}) + \log d_A - S(\rho_A)  - \log (d_A d_B) +  S(\rho_{AB}) - \log (d_A d_C) +  S(\rho_{AC})\\
& = S(\rho_{AB}) + S(\rho_{AC}) - S(\rho_{ABC}) - S(\rho_A)\\
& \ge 0,
\end{align}
since $S(\rho_{AB}) + S(\rho_{AC}) - S(\rho_{ABC}) - S(\rho_A) \ge 0$ (by the strong subadditivity of von Neumann's entropy \cite{nielsen}). 
\begin{teo}
 If a tripartite quantum system $\rho_{ABC}$ is pure, then $\rho_{ABC}$ satisfies the trade-off relation
\begin{align}
     C^{Ic}_{re}(\rho_{ABC}) = C^{Ic}_{re}(\rho_{AB}) + C^{Ic}_{re}(\rho_{AC}) + C^{Ic}_{re}(\rho_{BC}),
 \end{align}
 and consequently the monogamy relation
 \begin{equation}
     C^{Ic}_{re}(\rho_{A|BC}) = C^{Ic}_{re}(\rho_{AB}) + C^{Ic}_{re}(\rho_{AC}).
 \end{equation}
\end{teo}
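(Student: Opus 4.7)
The plan is to expand every IRECC via the formula $C^I_{re}(\rho) = \log d - S(\rho)$, observe that the dimension logarithms cancel identically in each correlated quantity, and then invoke the entropic symmetries of a pure tripartite state. Both the trade-off equality and the monogamy equality should then reduce to identities among von Neumann entropies of reductions of $\ket{\psi}_{ABC}$.

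First, for any (not necessarily pure) tripartite $\rho_{ABC}$, I would work out the closed form
\begin{equation*}
C^{Ic}_{re}(\rho_{ABC}) = \bigl(\log(d_A d_B d_C) - S(\rho_{ABC})\bigr) - \sum_{\alpha = A,B,C}\bigl(\log d_\alpha - S(\rho_\alpha)\bigr) = S(\rho_A) + S(\rho_B) + S(\rho_C) - S(\rho_{ABC}),
\end{equation*}
noting that all $\log d_\alpha$ factors cancel. By the preceding theorem, each bipartite contribution is $C^{Ic}_{re}(\rho_{XY}) = S(\rho_X) + S(\rho_Y) - S(\rho_{XY})$, i.e.\ the quantum mutual information.

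Next, I would specialize to $\rho_{ABC} = \ket{\psi}\bra{\psi}$. Schmidt decomposition across each of the three bipartitions yields the familiar identities $S(\rho_{BC}) = S(\rho_A)$, $S(\rho_{AC}) = S(\rho_B)$, $S(\rho_{AB}) = S(\rho_C)$, together with $S(\rho_{ABC}) = 0$. Substituting these into the sum over pairs,
\begin{equation*}
\sum_{\alpha < \beta} C^{Ic}_{re}(\rho_{\alpha\beta}) = 2\bigl(S(\rho_A) + S(\rho_B) + S(\rho_C)\bigr) - \bigl(S(\rho_C) + S(\rho_B) + S(\rho_A)\bigr) = S(\rho_A) + S(\rho_B) + S(\rho_C),
\end{equation*}
which equals $C^{Ic}_{re}(\rho_{ABC})$ in the pure case, establishing the trade-off equality.

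For the monogamy relation, I would apply the bipartite formula to the split $A|BC$, treating $BC$ as a single party of dimension $d_B d_C$, obtaining $C^{Ic}_{re}(\rho_{A|BC}) = S(\rho_A) + S(\rho_{BC}) - S(\rho_{ABC}) = 2\,S(\rho_A)$ by purity. A direct computation then gives $C^{Ic}_{re}(\rho_{AB}) + C^{Ic}_{re}(\rho_{AC}) = \bigl(S(\rho_A) + S(\rho_B) - S(\rho_C)\bigr) + \bigl(S(\rho_A) + S(\rho_C) - S(\rho_B)\bigr) = 2\,S(\rho_A)$, so both sides agree. I do not expect any serious obstacle: the only delicate point is keeping the $\log d_\alpha$ bookkeeping consistent across joint and marginal dimensions, after which the whole theorem collapses to the well-known entropic symmetries of pure tripartite states.
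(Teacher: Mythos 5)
Your proof is correct and follows essentially the same route as the paper: expand the intrinsic relative entropies so the $\log d$ terms cancel, reduce everything to von Neumann entropies, and invoke $S(\rho_{ABC})=0$ together with $S(\rho_{AB})=S(\rho_C)$ and its permutations. The paper phrases the trade-off step as the vanishing of the interaction information and derives monogamy from its earlier equivalence theorem, whereas you verify both sides directly, but the content is identical.
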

\begin{proof}
\begin{align}
C^{Ic}_{re}(\rho_{ABC})- \sum_{\alpha < \beta = A,B,C}C^{Ic}_{re}(\rho_{\alpha\beta}) & = C^{I}_{re}(\rho_{ABC}) + \sum_{\alpha = A,B,C}C^{I}_{re}(\rho_{\alpha}) - \sum_{\alpha < \beta = A,B,C}C^{I}_{re}(\rho_{\alpha\beta})\\
& = S(\rho_{AB}) + S(\rho_{AC}) + S(\rho_{BC}) - S(\rho_{ABC}) - S(\rho_{A}) - S(\rho_{B}) - S(\rho_{C})\\
& = \mathcal{T_{A,B,C}},
\end{align}
since $\mathcal{T_{A,B,C}}$ is the interaction information \cite{Cover}, and $\mathcal{T_{A,B,C}} = 0$ for tripartite pure states, since $S(\rho_{ABC}) = 0$, and $S(\rho_{AB}) = S(\rho_{C})$, $S(\rho_{AC}) = S(\rho_{B})$, $S(\rho_{BC}) = S(\rho_{A})$ \cite{Witten}. This completes the proof.
\end{proof}
Consequently, $C^{Ic}_{re}(\rho_{A|BC}) = C^{Ic}_{re}(\rho_{AB}) + C^{Ic}_{re}(\rho_{AC})$ is equivalent to $\mathcal{I}_{A|BC} = \mathcal{I}_{A,B} + \mathcal{I}_{A,B}$ \cite{Costa}, once that $C^{Ic}_{re}(\rho_{A|BC}) = \mathcal{I}_{A|BC}$, $C^{Ic}_{re}(\rho_{XY}) = \mathcal{I}_{X,Y}$, $\forall \ X,Y = A,B,C$ such that $X \neq Y$.

\section{Conclusions}
\label{sec:con}
Monogamy relations are an important feature of quantum correlations, as they tells us that a specific quantum resource cannot be shared freely. Recently, the correlated coherence was used as a resource for remote state preparation and quantum teleportation \cite{Li}. Hence, it is important to know if the correlated coherence satisfies monogamy relations. In this paper, we have studied the monogamy properties of the correlated coherence for the $l_1$-norm and relative entropy measures. For the $l_1$-norm, the correlated coherence is monogamous for a given class of quantum states, and we conjectured that it is monogamous, at least, for tripartite pure quantum states. For the relative entropy of coherence, and using a maximally mixed state as the reference incoherent state, we showed that the correlated coherence is monogamous for tripartite pure quantum system. Another interesting finding is that the intrinsic relative entropy of correlated coherence (IRECC) is equal to quantum mutual information. Finally, we also established some trade-off relations between tripartite and bipartite quantum systems, and proved that the trade-off relation $C^c(\rho_{ABC}) \ge C^c(\rho_{AB}) + C^c(\rho_{AC}) + C^c(\rho_{BC})$ is equivalent to the monogamy relation $C^c(\rho_{A|BC}) \ge C^c(\rho_{AB}) + C^c(\rho_{AC})$.

\begin{acknowledgments}
This work was supported by the Coordena\c{c}\~ao de Aperfei\c{c}oamento de Pessoal de N\'ivel Superior (CAPES), process 88882.427924/2019-01, and by the Instituto Nacional de Ci\^encia e Tecnologia de Informa\c{c}\~ao Qu\^antica (INCT-IQ), process 465469/2014-0.
\end{acknowledgments}

\end{document}